\documentclass[conference]{IEEEtran}

\usepackage{blindtext}
\usepackage[utf8]{inputenc}

\usepackage{pdfpages}
\usepackage{graphicx}
\graphicspath{ {/Users/chedia/Downloads/} }
\usepackage{epstopdf}
\usepackage{amsthm}
\usepackage{amsmath, amssymb}
\usepackage{bbm}

\hyphenation{op-tical net-works semi-conduc-tor IEEEtran pro-duct}

\title{internship report}

\newcommand{\reels}{\mathbb{R}}
\newcommand{\esperance}{\mathbb{E}}

\newtheorem{Pro}{Proposition}

\newtheorem{Cor}{Corollary}

\newtheorem*{proof*}{Proof}

\abovecaptionskip3pt \belowcaptionskip-12pt

\begin{document}

\title{The Effects of Mobility on the Hit Performance of Cached D2D Networks}

\author{Chedia~Jarray$^\dagger$ and Anastasios~Giovanidis$^\ast$\\[2ex]}
\maketitle

\begin{abstract}
A device-to-device (D2D) wireless network is considered, where user devices also have the ability to cache content. In such networks, users are mobile and communication links can be spontaneously activated and dropped depending on the users' relative position. Receivers request files from transmitters, these files having a certain popularity and file-size distribution. In this work a new performance metric is introduced, namely the Service Success Probability, which captures the specificities of D2D networks. For the Poisson Point Process case for node distribution and the $\mathrm{SNR}$ coverage model, explicit expressions are derived. Simulations support the analytical results and explain the influence of mobility and file-size distribution on the system performance, while providing intuition on how to appropriately cache content on mobile storage space. Of particular interest is the investigation on how different file-size distributions (Exponential, Uniform, or Heavy-Tailed) influence the performance.
\end{abstract}

\begin{keywords}
Wireless cache; Device-to-device; Poisson point process; Mobility; File-size; Content popularity; Heavy-tailed.
\end{keywords}

\let\thefootnote\relax\footnotetext{\hspace{-2ex}$^\dagger$UR MACS, ENIG-University of Gabes,  
Rue Omar Ibn Elkhattab, 6029 Gabes, Tunisia; jarray.chedia@hotmail.fr,\\
The doctoral visit of Chedia Jarray at T\'el\'ecom ParisTech was funded as a scholarship by the Ministry of Higher Education of Tunisia.
\\
$^\ast$CNRS-LTCI and T\'el\'ecom ParisTech,
23 avenue d'Italie,
75013 Paris, France; anastasios.giovanidis@telecom-paristech.fr}
\newcommand{\thefootnote}{\arabic{footnote}}

\section{Introduction}
The relatively recent commercial spread of new generation mobile devices such as tablets and smart-phones has triggered an explosive increase of data traffic and has made traffic management crucial for communication networks. Currently, mobile video streaming accounts for almost half of the mobile data traffic and is expected to have a considerable increase over the next years. These developments compel mobile operators to redesign their current networks and seek more advanced techniques to increase coverage, boost network efficiency, and cost-effectively bring content closer to the user, either by deploying small base-stations (BSs) \cite{golrezaei2013femtocaching}, \cite{bastug2014living}, or by exploring the possibilities for inter-device communication, known as D2D (device-to-device)  \cite{golrezaei2013femtocaching}, \cite{ADhill15b}.

We are particularly interested here in the potential of D2D communications, where mobile devices also play role in content delivery, and direct communication links between users are enabled. Such solution will possibly exist on top of the existing cellular infrastructure and is already envisioned for 5G networks. Furthermore, we consider the possibility of on-device content caching. The motivation is that, among existing (multimedia) content, only a small fraction is repeatedly used, which however triggers the majority of the total data traffic. Motivated by this, and the fact that mobile devices are equipped with cheap and relatively large storage capacity, caching finite popular files on mobile devices in advance could be promising to relieve the overloaded network traffic \cite{molisch2014caching}, \cite{ji2013fundamental}. The idea of D2D caching can significantly offload different parts of the network including the radio access network, core network, and backhaul, by smartly prefetching and storing contents on the user nodes. Because of the D2D links, multimedia files can be transmitted from one user to another with reduced latency. 

An important aspect in D2D communication that makes its design challenging is \textit{user mobility}. The fact that user nodes constantly change relative position is one of the major factors that can diminish possible benefits of this type of data transmission. \textit{A partly transmitted file due to connection loss can be completely useless}. 

In this work the performance of caching in D2D networks is studied, for different degrees of node mobility. Specifically, a D2D network is considered, where \textit{devices} (mobile nodes) are spatially distributed on the plane. The possibility of communication between a mobile node and a fixed station is left out, in this scenario. At some point in time a subset of these devices (\textit{receivers}) requests for data files, whereas the other nodes can serve them as potential \textit{transmitters}. Based on channel quality, reliable wireless links can be established between receivers and transmitters, that satisfy a required Quality-of-Service (QoS). If some of these transmitters also have the desired content cached, then transmission is initialised (at most one transmitter per receiver). However, nodes change position over time and consequently the link quality between receivers and transmitters is affected. An established link with sufficient quality at $t_o$, can be later dropped at some $t_1>t_o$ due to displacement of one of the two nodes in pair, and the consequent quality degradation.  In our model, connection loss due to transmitter displacement is considered an unsuccessful effort. The node mobility inherent in the nature of D2D communications is what this work wishes to study and provide a method to analyse performance metrics of interest.

Our contributions are the following:

\begin{itemize}
\item A mobility concept is introduced, where a node keeps its position for an exponential amount of time before being displaced far from the receiver. 
\item Content \textit{does not} have the same size, rather file-sizes take values sampled from relevant probability distributions. Possible such distributions are the Exponential, Uniform or a Heavy-Tailed one. 
\item New performance metrics are introduced: the Total and Expected Service Success Probability. For these, explicit expressions are provided for the case when node positions follow a Poisson Point Process (PPP). and for the $\mathrm{SNR}$ connectivity model. 
\item Performance evaluation is provided through comparison plots, both from simulations and analysis. Conclusions are drawn over the influence of mobility and file-size distribution on cached D2D performance. 
\end{itemize}

\subsection{Literature}

In the literature, a significant amount of work analyses content caching in wireless networks. Specifically, FemtoCaching is proposed by Shanmugam et al in \cite{shanmugam2013femtocaching} where caching helpers optimally store popular content for delay performance improvement. Bastug et al in \cite{bastug2014living}, treat the problem of proactive caching by use of stochastic geometry, and show the gains in terms of backhaul savings and user satisfaction. B{\l}aszczyszyn and Giovanidis in \cite{BlaGioICC15} study the optimal probabilistic placement policy for maximizing the total hit probability in random network topologies. Molisch et al \cite{molisch2014caching} evaluate the performance of caching on helper station/devices and optimise video quality by proposing optimal storage schemes. Content placement for delay-tolerant service satisfaction is analysed by Sermpezis et al in \cite{SeSpy15}. Asymptotic laws of the required link capacity of a cached multi-hop wireless network are studied by Paschos et al in \cite{PaschWIOPT12}. Related to the problem of routing and replica placement in a network, Sourlas et al propose various on-line autonomous cache management algorithms \cite{SourlasAlgoN13}.

Cached D2D communications is treated by Ji et al in \cite{ji2013fundamental}, where the authors find asymptotic throughput scaling laws with coded caching and D2D spatial reuse. Afshang et al analyse cached coverage in a clustered PPP model in \cite{ADhill15b}. Mobility in cellular networks is an important topic and interesting analytical models have been proposed by Lin et al \cite{GantiMob13}, and Hsu et al \cite{Hsu09}. The optimal storage allocation when user mobility is modelled by a a Markov chain random walk is approximately solved by Poularakis and Tassiulas in \cite{poularakis2013exploiting}. To optimally store content, the authors in \cite{ZhuoCont11} formulate and solve a contact-duration-aware data replication optimisation problem.

The remainder of this paper is organized as follows. Section \ref{secII} describes in detail the system model, whereas the relevant performance metrics are introduced in Section \ref{secIII}. Section \ref{secIV} contains the main analytical results of the stochastic geometry analysis, followed by certain special cases of interest and a discussion. Performance evaluation of the model is provided in Section \ref{secV}, where simulations confirm the validity of the derived analytical expressions. Furthermore, plots illustrate how the performance is influenced by system parameters, such as the mean node lifespan or the file-size order and distribution. Conclusions of the work are drawn in Section \ref{secVI}.


\section{System Model}
\label{secII}

\subsection{Physical Aspects}

The transmitters and receivers are placed following two independent planar homogeneous PPPs $\Phi_t$ and $\Phi_r$ with intensity $\lambda_t>0$ and $\lambda_r>0$, respectively. Their superposition is also a PPP with sum intensity $\lambda=\lambda_t+\lambda_r$ [$devices/m^2$]. The transmitter devices are enumerated $i=1,2,\ldots,$ and we denote by \textit{$x_i$} the location of the \textit{$i^{th}$} transmitter. Since the receivers follow a PPP as well, we can condition on a typical receiver located at the Cartesian origin $o=(0,0)$. The Euclidian distance between the origin and each transmitter at $x_i$ is denoted by $r_i:=|x_i|$. According to the Slivnyak-Mecke theorem and the stationarity and isotropy of the PPP, the results for the typical receiver are valid for any receiver of $\Phi_r$ randomly located on the 2D plane \cite{BacBlaVol1}.

For the signal propagation model, the path-loss from transmitter to receiver is equal to $l(r_i)=r_{i}^{-\alpha}$, where $\alpha>2$. All transmitters are assumed to emit the same power level equal to P [$Watt$]. Let $H_i$ be the random variable for channel fading between $x_i$ and $o$, with unit average. These variables indexed by $i$ are independent and identically distributed and the generic fading variable is denoted by $H$ (no specific distribution is considered). The received signal power is $P H l(r_i)$. In the case of \textit{no interference} (or at least not considerable) between devices, the signal reception at $o$ is only affected by noise, with constant power $N>0$ $[Watt]$. For each link, communications takes place within a frequency band of $W$ [$Hz$], assigned by the operating system to guarantee the interference-free link (as in OFDMA). The quality of coverage provided by the transmitter $x_i$ to the origin is described by the Signal-to-Noise-Ratio $\mathrm{SNR}(r_i)$, equal to
\begin{equation}
\label{snr}
\mathrm{SNR}(r_i)=\frac{PHr_{i}^{-\alpha}} {N}.
\end{equation}

The maximum transmission rate from $x_i$ to $o$ in $[bits/sec]$ is given by the Shannon formula
\begin{eqnarray}
\label{rate}
\mathcal{R}(r_i) & = & W\log_2\left(1+\mathrm{SNR}(r_i)\right).
\end{eqnarray}

\subsection{Content and its Popularity}

The receiver at $o$ demands a file (say audio or video) from a finite set of $F>1$ available ones. This set is called \textit{content catalogue} and is denoted by $\textit{$C$} :=\{c_1,c_2....,c_F\}$. Each content (file) $c_j$ is related to a popularity value, assumed \textit{constant and known} a priori. The file popularity can be understood as the frequency of a particular content being called from an infinite stream of requests, and its law $\left\{a_j\right\}$, $j=1,\ldots,F$, can be any probability mass function of $F$ objects, with
\begin{eqnarray}
\sum_{j=1}^{F} a_j = 1.
\end{eqnarray}

If we assume that content is indexed in decreasing order of popularity  $j=1,\ldots,F$, and is Zipf-like distributed \cite{BreslauINFO99}, we find the traffic case of the Independent Reference Model (IRM), 
with $a_j=A^{-1}j^{-\gamma}$, and $A:= \sum_{j=1}^{F} j^{-\gamma}$ being the normalising constraint. The Zipf exponent $\gamma$ characterises the distribution and depends on the type of content. 
The Zipf distribution is \textit{heavy-tailed}\footnote{The definition of a heavy-tailed distribution is ambiguous in the literature. It often refers to distributions that are heavier than the Exponential. In a stronger sense it refers to distributions that have certain moments infinite, i.e. the mean, or just the variance. We will apply here the first definition, and as a consequence, the \textit{log-normal} distribution, which has all moments finite, will be considered also as heavy-tailed.}, which means that it can give rise to extremely large values with non-negligible probability. When $\gamma<2$, and for $F\rightarrow \infty$ the Zipf distribution has infinite mean value. 


\subsection{Content Size}

Furthermore, we consider that each content $c_j\in C$ has a positive size \textit{$z_j>0$} given in [$bits$]. The size generally varies among different files, it is \textit{constant and known}, and depends on the content type. 
A realisation of sizes for $F$ files can be \textit{sampled} from a probability distribution. There are many related studies in the literature that propose and make use of the distribution for the file sizes. Certain authors suggest that this distribution should be \textit{heavy-tailed} as well, as for the popularities, the reason being that traffic is dominated by multimedia content that is allowed to have very long duration. We found and propose here the following possibilities.

(A) Crovella and Bestavros \cite{CroBeSELF97} attribute a \textit{Pareto} distribution with parameter $0.9\leq a \leq 1.1$, which is verified to fit well with available network data sets. Its tail probability is
\begin{eqnarray}
\label{Pareto}
\bar{F}_{P}(z): = P\left[Z > z\right] = (\beta/z)^{a},
\end{eqnarray}
with  \textit{shape parameter} $a$ and \textit{scale parameter} $\beta > 0$, and $z\geq \beta$. The Pareto distribution has infinite variance for $a \leq 2$, and infinite mean for $a\leq 1$.

(B) Lee et al in \cite{NetWIFI13}, and Abhari and Soraya \cite{AbhariYouTube10} analysed measurements from YouTube videos and proposed a \textit{Weibull} distribution for video files, with tail
\begin{eqnarray}
\label{Weibull}
\bar{F}_{W}(z) = \exp(-(z/\mu)^{k}).
\end{eqnarray}
In general, for \textit{shape} $k<1$ the Weibull distribution exhibits infinity of certain first moments, and it becomes heavier as $k$ gets smaller. In these references, audio files are modelled by the \textit{Exponential} distribution (Weibull with $k=1$). 

(C) Certain authors, such as Downey in \cite{Dow01}, give evidence that the file-size distribution in the WWW, rather than Pareto, is actually \textit{Log-normal}; $\ln\mathcal{N}\left(\mu,\sigma\right)$. A random variable is log-normally distributed, if its natural logarithm follows the Normal distribution $\mathcal{N(\mu,\sigma)}$. An easy way to describe it is through its p.d.f.
\begin{eqnarray}
\label{lognormalpdf}
f_{\log}(z) = \frac{1}{z\sigma \sqrt{2\pi}}\exp\left(\frac{(\ln z - \mu)^2}{2\sigma^2}\right).
\end{eqnarray}


The aforementioned publications, are not specific for mobile traffic - where users differ from wired networks in behavioural patterns and service needs. Mobile users are expected to show less interest for very large files, due to limited bandwidth of wireless links, storage limitations, and shorter mobile sessions especially outside home and work environments (see also \cite{Arv12}). For these reasons, we can propose to limit the file size by upper $z_{max}$ (also lower $z_{min}$) bounds, depending on the service type. This can be achieved by a \textit{truncated} distribution, e.g. truncated Log-normal, or simply by use of a \textit{Uniform} one. Video files are often larger (even of an order of magnitude) than audio files. A reasonable size range for these two media types related to mobile offloading is proposed in Table \ref{Tab1}.

\begin{table}[h!]
\caption{Uniform distribution}
\vspace{+0.5cm}
 \centering
 \normalsize
\begin{tabular}{|c|c|}
 \hline
 Content Type & Size Range (z)\\ [1 ex]
 \hline
  Audio file & 100 Kb - 20 Mb \\ [0.5 ex]
 \hline
 Video file  & 50 Mb - 2 Gb \\ [0.5 ex]
 \hline
\end{tabular}
\label{Tab1}
\end{table}


Altogether, a tuple of a-priori known values $(a_j,z_j)$ is related to each content  $c_j$. It is important to note that in \cite{BreslauINFO99} no strong correlation between file size and popularity was observed, except from the average size of popular contents being slightly smaller than that of the unpopular ones. 

\subsection{Content Placement to Caches}

The storage inventory of a D2D transmitter $x_{i}\in\Phi_t$ is denoted by $\Xi^{(i)}$ and contains a number of $|\Xi^{(i)}|\leq K$ distinct entire files from the catalogue \textit{$C$}. We consider that the content is independently installed in the caches by some probability distribution which guarantees that
	\begin{eqnarray}
	\label{bj}
    		b_j = Pr(c_j \in \Xi), & & 0\leq b_j \leq 1 , \ \forall {j},
     	\end{eqnarray}
i.e the probability (consequently frequency) that content \textit{$c_j$} is stored in any of the memory caches of the network devices is \textit{$b_j$}. In the above, due to independence, the superscript on inventories $(i)$ is dropped. These content placement probabilities are (pre)determined by the \textit{content placement policy}.

Since each transmitter $x_i$ has a memory of size $K$ [$objects$] irrespective of their file size, no more than $K$ distinct objects should be made available in each cache. In \cite{BlaGioICC15}, a probabilistic block placement (PBP) policy was suggested, that satisfies the sum constraint 
\begin{equation}
\label{probbjK}
\sum_{j=1}^{F} b_j \leq K,
\end{equation}
and at the same time guarantees the hard constraint $|\Xi^{(i)}| \leq K$, for all $i$. This is the policy we consider here as well.

\subsection{Device Mobility}

Node mobility is modelled in a simplified way as follows. At instant $t=0$, the receiver at the Cartesian origin sees a set of transmitters with fixed positions on the plane. The distance between the receiver and each transmitter varies over time, as a function $r_i(t)$, $\forall i$. Then, $\mathrm{SNR}(r_i(0))$ in (\ref{snr}) gives the signal quality and $\mathcal{R}(r_i(0))$ in (\ref{rate}) the achievable rate for the link between $o$ and $x_i$ at time origin. 

We need to include in the model that the position of each node, due to mobility, is bound to change. To do so, each link is considered active for a time period of $\tau_i$ [$sec$], different for each $x_i$, with $\mathrm{SNR}(r_i(t))=\mathrm{SNR}(r_i(0))$, within $0\leq t < \tau_i$. This is called its \textit{lifespan}. At $\tau_i$ the link is immediately dropped, because the transmitter instantly moves far away from its position so that $\mathrm{SNR}(r_i(\tau_i))\approx 0$. Of course, such a model with sudden node displacement does not describe the full complexity of random user movements and their trajectories on the plane, it is however sufficient in its simplicity to capture the main effects of mobility and to help better manage the cache inventories. 

The time intervals $\tau_i$ are random i.i.d. variables, that are also independent of all other parameters of the model, such as node position or fading. They can be seen as a mark of the process $\Phi_t$. The generic random variable for all $\tau_i$ is $T\sim Exponential(\bar{\tau}^{-1})$, with mean $\mathbb{E}\left[T\right] = \bar{\tau}$.

As a consequence of the model, the number of possible active links gradually reduces over time.


\section{Performance Metrics}
\label{secIII}

The user at the origin requests for a certain content from the catalogue $C$, which follows the popularity law $\left\{a_j\right\}$. We remind the reader that the placement probabilities on caches follow $\{b_j\}$. For a given content $c_j$, its service is successful if there exists \textit{at least} one transmitter $x_i\in\Phi_t$ who satisfies both conditions:

\emph{(i)} the object is found in its inventory, i.e. the event is true
\begin{equation}
\label{Aij}
A_{ij}:=\{c_j \in \Xi^{(i)}\},
\end{equation}

\emph{(ii)} the communication link between $o$ and $x_i$ is sufficient for the entire file to be transmitted, before the transmitter leaves its position and becomes unavailable. This can be formally expressed as the event
\begin{equation}
\label{Bij}
B_{ij}:={\left\{\tau_i W \log_2(1+\mathrm{SNR}(r_i)) \geq z_j\right\}},
\end{equation}
i.e. that the amount of information in [$bits$] transmitted over the link between $x_i$ and $o$ within the lifespan $\tau_i$ is greater than or equal to the total size $z_j$ of the object. Alternatively, we can write $\left\{\mathcal{R}(r_i)\geq z_j/\tau_i\right\}$, which poses a requirement for the throughput, to be greater than or equal to a threshold of minimum rate (file-size over transmitter's lifespan).

As mentioned, uninterrupted service is guaranteed when at least one transmitter satisfies (i) and (ii), but the case that more than one may exist with these conditions is not excluded. If so, we assume a random choice among these possibilities for a communication link with $o$ to be established, and also the existence of a schedule to achieve this. Our work mainly focuses on the probability that the service is satisfied. It does not consider issues on scheduling, resource allocation or load balancing (the load here being the receivers), when many users compete for access to the same transmitter. It neither considers the possibility for transmitters to cooperate for service. This option can be left open for future investigations. 

Having said this, service is achieved from $x_i$ to $o$ for object $c_j$, when 
\begin{equation}
   \Psi_{ij}:=\mathbf{1}_{\{A_{ij}\}} \mathbf{1}_{\{B_{ij}\}}=1,
\end{equation}
where $\mathbf{1}_{\{E\}}$ is the indicator function, which yields one, if the event $E$ is true, otherwise zero. We say that the receiver's demand is served successfully, when at least one transmitter $x_i\in\Phi_t$ exists, that satisfies $\Psi_{ij}=1$, i.e. if the following event is true
\begin{equation}
\label{Sj}
S_j := \{\sum_{i=1}^{\infty} \Psi_{ij} \geq 1\}.
\end{equation}

By denoting the probability of service for content $c_j\in C$ by $P_{srv,j}(b_j;z_j)$, we have
\begin{eqnarray}
\label{Pservj}
P_{srv,j}(b_j;z_j) = Pr(S_j) = Pr\left[\sum_{i=1}^{\infty} \Psi_{ij} \geq 1\right].
\end{eqnarray}

The performance metric is the \textit{Total Service Success Probability}, denoted by $P_{srv}$, and it is equal to the expected service probability over the content popularity, given the file sizes,
\begin{equation}
\label{Psrv}
 P_{srv}\left(\{b_j\};\left\{a_j\right\},\left\{z_j\right\}\right)= \sum_{j=1}^{F} a_j P_{srv,j}(b_j;z_j).
\end{equation}

Furthermore, we can take the distribution of file-size into consideration, by taking expectation of the above metric over the file-size variables. These are i.i.d. with c.d.f $F_Z(z)$ for each file size $z_j$, $j=1,\ldots,F$. We define the \textit{Expected Service Success Probability}, denoted by $\tilde{P}_{srv}(\{b_j\};\{a_j\})$, as

\begin{eqnarray}
\label{expPsrv}
\tilde{P}_{srv}(\{b_j\};\{a_j\}) & := & \mathbb{E}\left[P_{srv}\left(\left\{b_j\right\};\left\{a_j\right\},\left\{z_j\right\}\right)\right]\nonumber\\
& \stackrel{(\ref{Psrv}),indep.}{=} & \sum_{j=1}^{F} a_j \mathbb{E}\left[P_{srv,j}(b_j;Z)\right].
\end{eqnarray}


\section{Main Results}
\label{secIV}

\subsection{Per-Object Service Success Probability}

\begin{Pro} 
\label{Prop1}
The probability that the typical receiver requesting for object $c_j \in C$ is served by at least one transmitter in the downlink, for the $\mathrm{SNR}$ coverage model, is equal to
\begin{eqnarray}
\label{Psrvj}
P_{srv,j}(b_j;z_j) = 1 - \exp\left(-\pi \lambda_t b_j  \frac{I_H}{N^{2/ \alpha}} I_{T}(z_j,\bar{\tau}) \right),
\end{eqnarray}
where 
\begin{eqnarray}
\label{Eh}
I_H & := & P\mathbb{E}[H^{2/ \alpha}],\\
\label{Et}
I_{T}(z_j,\bar{\tau}) & := & \mathbb{E}[(\frac {1} {2^{\frac{z_j}{W T}}-1})^{2/ \alpha}].
\end{eqnarray}
\end{Pro}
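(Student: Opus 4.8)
The plan is to describe the set of transmitters that can actually serve content $c_j$ as an independent thinning of $\Phi_t$, and then obtain $P_{srv,j}$ from that thinned process's void probability.

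First I would write $P_{srv,j}(b_j;z_j)=\Pr(S_j)=1-\Pr\!\bigl(\sum_{i}\Psi_{ij}=0\bigr)$ from (\ref{Sj})--(\ref{Pservj}), i.e.\ one minus the probability that \emph{no} transmitter satisfies $\Psi_{ij}=1$. The key structural observation is that $\Phi_t$ may be regarded as a PPP marked independently by the i.i.d.\ triples $(H_i,\tau_i,\mathbf{1}_{\{A_{ij}\}})$ (the fadings, the lifespans, and, for the fixed object $j$, the cache indicators), all independent of the point locations; for fixed $j$ the indicator $\mathbf{1}_{\{A_{ij}\}}$ is $\mathrm{Bernoulli}(b_j)$ and independent across transmitters, so only cross-cache independence is needed and the within-cache joint law imposed by the $|\Xi^{(i)}|\le K$ constraint is irrelevant. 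By the marking/thinning theorem for Poisson processes \cite{BacBlaVol1}, the retained set $\{x_i:\Psi_{ij}=1\}$ is again a PPP, with intensity measure $\Lambda(\mathrm{d}x)=\lambda_t\,p(x)\,\mathrm{d}x$, where $p(x):=\Pr(\Psi_{ij}=1\mid x_i=x)$. Its void probability over the whole plane then gives $\Pr\!\bigl(\sum_i\Psi_{ij}=0\bigr)=\exp\!\bigl(-\lambda_t\int_{\mathbb{R}^2}p(x)\,\mathrm{d}x\bigr)$, so it remains to compute that integral.

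By independence of the cache content from the channel quantities, $p(x)=b_j\,\Pr(B_{ij}\mid r_i=|x|)$. Next I would invert the rate condition (\ref{Bij}): using (\ref{snr}), the inequality $\tau W\log_2(1+\mathrm{SNR}(r))\ge z_j$ is equivalent to $r\le\bigl(PH/(N(2^{z_j/(WT)}-1))\bigr)^{1/\alpha}=:R(H,T)$, so conditionally on $(H,T)$ the event $B_{ij}$ coincides with $\{r_i\le R(H,T)\}$. Switching to polar coordinates and exchanging the (nonnegative) mark-expectation with the radial integral by Tonelli,
\begin{equation}
\int_{\mathbb{R}^2}p(x)\,\mathrm{d}x = 2\pi b_j\int_0^\infty\Pr(B_{ij}\mid r_i=r)\,r\,\mathrm{d}r = \pi b_j\,\mathbb{E}_{H,T}\!\left[R(H,T)^2\right].
\end{equation}
Finally, $R(H,T)^2 = (PH)^{2/\alpha}N^{-2/\alpha}(2^{z_j/(WT)}-1)^{-2/\alpha}$, and since $H$ and $T$ are independent the expectation factors as $N^{-2/\alpha}\,\mathbb{E}[(PH)^{2/\alpha}]\,\mathbb{E}[(2^{z_j/(WT)}-1)^{-2/\alpha}] = N^{-2/\alpha}\,I_H\,I_T(z_j,\bar{\tau})$, with $I_H=\mathbb{E}[(PH)^{2/\alpha}]=P^{2/\alpha}\mathbb{E}[H^{2/\alpha}]$ (matching (\ref{Eh}), with the prefactor read as $P^{2/\alpha}$) and $I_T$ as in (\ref{Et}). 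Substituting into the exponential yields (\ref{Psrvj}).

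I expect the thinning step to be the main point to state cleanly: one must argue carefully that $\Psi_{ij}$ depends on the location $x_i$ only through $r_i=|x_i|$ and otherwise only through marks that are i.i.d.\ and independent of $\Phi_t$, so that the retained process is genuinely Poisson and the void probability has the clean exponential form. A secondary issue is integrability, $\mathbb{E}_{H,T}[R(H,T)^2]<\infty$ — equivalently $\mathbb{E}[H^{2/\alpha}]<\infty$ together with $\mathbb{E}[(2^{z_j/(WT)}-1)^{-2/\alpha}]<\infty$ — which is needed so that $\Lambda(\mathbb{R}^2)<\infty$ and $P_{srv,j}<1$; the second expectation is finite because the exponential tail of $T$ controls the at most polynomial-in-$T$ growth of $(2^{z_j/(WT)}-1)^{-2/\alpha}$ as $T\to\infty$.
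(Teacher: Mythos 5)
Your proof is correct and follows essentially the same route as the paper: the void probability of the independently thinned process $\{x_i:\Psi_{ij}=1\}$ is exactly what the paper obtains by applying the PGFL to $u_j(x)=\Pr(\Psi_{ij}=0)$, and your subsequent polar-coordinate/Tonelli computation of $\int_{\mathbb{R}^2}p(x)\,\mathrm{d}x=\pi b_j\,\mathbb{E}\left[R(H,T)^2\right]$ matches the paper's evaluation of $\mathbb{E}[\tilde{R}_j^{2/\alpha}]$ step for step. Your remark that the prefactor in $I_H$ should be read as $P^{2/\alpha}$ rather than $P$ is also well taken --- the paper's own worked example with exponential fading uses $P^{2/\alpha}$, so (\ref{Eh}) as printed contains a typo.
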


\begin{proof}
We need to calculate (\ref{Pservj}). The randomness of the event (\ref{Sj}) is due to \emph{(a)} the position of the transmitters, which in our model follows a PPP, \emph{(b)} the lifespan $\tau_i\sim T$ of each node, \emph{(c)} the channel fading $h_i\sim H$ between each transmitter $x_i$ and the the origin $o$. Hence,
\begin{eqnarray}
\label{eqPsrv}
P_{srv,j} & = & Pr\left[\sum_{i = 1}^{\infty}\Psi_{ij}\geq 1\right] =  1-Pr\left[\sum_{i=1}^{\infty} \Psi_{ij}=0\right] \nonumber \\
& = & 1-Pr\left[\bigcap_{i=1}^{\infty} \{\Psi_{ij}=0\}\right]\nonumber \\
& \overset{(a)}{=} & 1-\mathbb{E}\left[\prod_{i=1}^{\infty}\mathbf{1}_{\{\Psi_{ij}=0\}}\right]\nonumber \\
& \overset{(b)}{=} & 1-\mathbb{E}\left[\prod_{i=1}^{\infty}Pr(\Psi_{ij}=0)\right]\nonumber \\
& \overset{(c)}{=}  & 1-\exp\left(-\int\limits_{\reels^{2}}(1-u_{j} (x))\lambda_t dx\right)\nonumber \\
& \overset{(d)}{=}  & 1-\exp\left(-2\pi \lambda_t \int\limits_{0}^{\infty} (1-u_{j} (r))r dr\right).
\end{eqnarray}

In the above, (a) is due to the independence of the events $\Psi_{ij}$, (b) results by taking expectation over the i.i.d. variables $\{\tau_i\}$, $\{h_i\}$, (c) is obtained from the probability generating functional (PGFL) of the PPP, which states that for some function $u(x)$ it holds $\mathbb{E}[\prod_{x\in\Phi}{u(x)}]=exp{(-\lambda
\int_{\reels^{2}}(1-u(x))dx)}$ \cite{HaenggiBook13}, \cite{BacBlaVol1}. In our case, $u_{j} (x):= Pr(\Psi_{ij}=0)$ and $\lambda_t$ is the density of D2D transmitters. The last step (d) comes from Cartesian-to-Polar transformation.

To derive an explicit expression for the per-object Service Success Probability $P_{srv,j}$, the function $u_{j} (x_i)$ needs to be calculated.
 \begin{eqnarray}
 \label{eqn:15}
 u_{j} (x_i) & = & Pr(\Psi_{ij}=0) =  Pr(\mathbf{1}_{A_{ij}}\mathbf{1}_{B_{ij}}=0)\nonumber \\
 & = & Pr(\mathbf{1}_{\{A_{ij}\cap B_{ij}\}}=0)\nonumber \\
 & = & 1-Pr({A_{ij}}\cap {B_{ij}})\nonumber \\
 & \overset{(e)}{=} & 1-Pr(A_{ij})Pr(B_{ij})\nonumber \\
 & := & 1-f_j(|x_i|).
\end{eqnarray}

In the above, (e) comes from the independence of the two events $A_{ij}$ and $B_{ij}$. We have replaced $f_j(|x_i|): = P(A_{ij})P(B_{ij})$, equal to $f_j(r_i)$, so that the function $u_j(x_i)$, depends only on the radial distance from $o$ to the transmitter. The reason is that,
\begin{eqnarray}
\label{fj}
f_j(r_i) & := & \underset{(i)}{\underbrace{Pr(c_j \in \Xi_i)}}\underset{(ii)}{\underbrace{Pr(\tau_i W\log_2 (1+\mathrm{SNR}(r_i)) \geq z_j)}}\nonumber\\
 & \overset{\tau_i\sim T}{=}  & b_j Pr\left(\mathrm{SNR}(r_i) \geq 2^{\frac{z_j}{W T}}-1\right).
\end{eqnarray}
Substituting (\ref{eqn:15}) into (\ref{eqPsrv}) yields the following
 \begin{eqnarray}
 P_{srv,j} 
 & = &  1-\exp\left(-2\pi \lambda_t \int\limits_{0}^{\infty} f_j(r)r dr\right).
 \label{eqn:20}
\end{eqnarray}

Using also the equality in (\ref{fj}) and the $\mathrm{SNR}$ definition in (\ref{snr}), the expression becomes
\begin{eqnarray}
P_{srv,j} & \overset{(k)}{=}  & 1-\exp\left(-2\pi \lambda_t b_j \int\limits_{0}^{\infty} Pr(r^{\alpha} \leq \tilde{R}_j)rdr\right)\nonumber\\
& = & 1-\exp\left(-2\pi \lambda_t b_j \int\limits_{0}^{\infty}\mathbb{E}[\mathbf{1}_{\{r^{\alpha} \leq \tilde{R}_j\}}]rdr\right)\nonumber\\
&  \overset{(\ell)}{=}  & 1-\exp\left(-2\pi \lambda_t b_j \mathbb{E} \left[\int\limits_{0}^{\infty} \mathbf{1}_{\{ r \leq \tilde{R}_j^{1/\alpha} \}} rdr\right]\right)\nonumber\\
& = & 1-\exp\left(-2\pi \lambda_t b_j \mathbb{E} \left[\int_{0}^{\tilde{R}_j^{1/\alpha}}rdr\right]\right)\nonumber\\
& = & 1-\exp\left(-2\pi \lambda_t b_j \mathbb{E} \left[\frac {r^{2}} {2} |_{0}^{\tilde{R}_j^{1/\alpha}}\right]\right)\nonumber\\
& = & 1-\exp\left(-\pi \lambda_t b_j \mathbb{E}[\tilde{R}_j^{2/ \alpha}]\right),\nonumber
 \end{eqnarray}
 where in (k) we substitute $\tilde{R}_j := \frac {PH}{N} \frac {1}{2^{\frac{z_j}{WT}}-1}$. Hence $\mathbb{E}[\tilde{R}^{2/ \alpha}] = \frac{P\mathbb{E}[H^{2/\alpha}]}{N^{2/\alpha}}\mathbb{E}[ \left(2^{\frac{z_j}{WT}}-1\right)^{-2/\alpha}]$. The order of integration is interchanged in ($\ell$) due to Fubini's theorem. 
\end{proof}

\textit{Discussion on Proposition 1:} From Eq. (\ref{Psrvj}) we observe that the service probability for object $c_j$ is a function of its content placement probability $b_j$ and its file-size $z_j$. In fact, the function $P_{srv,j}$ is increasing in $b_j$. Regarding the file-size, the function is decreasing in $z_j$. Hence, service exhibits the behaviour that one would expect related to object characteristics. Another dependence of the function is on the average lifespan $\bar{\tau}$, and it is increasing as the mean lifespan increases. Actually, a larger lifespan corresponds in our model to \textit{low mobility}.

Furthermore, the service probability depends also on system parameters, such as the path-loss exponent $\alpha$, the transmitter density $\lambda_t$, their transmit power $P$ and bandwidth $W$. As $\lambda_t$ increases, and for $b_j>0$, the $P_{srv,j}$ tends to 1. The reason for such behaviour is the $\mathrm{SNR}$ communications model, which does not consider interference, so densifying the network, simply guarantees that the receiver will find its request somewhere close with sufficiently high probability. The function is further increasing in $P$ and $W$ and decreasing in $\alpha$, the latter because the coverage area of each transmitter decreases for increasing path-loss exponent.

The expression $P_{srv,j}$ depends on the expectations $I_H$  and $I_T$ in (\ref{Eh}) and (\ref{Et}) respectively. 

\subsubsection{Fading distribution for $I_H$}

To obtain specific expressions for $\esperance[H^{2/ \alpha}]$ we need to determine the type of distribution for the propagation effects $H$ (shadowing, and/or fading) experienced by the typical receiver. A list of such distributions with their density function and the calculation of $I_H$ is found in \cite{BartekPIMRC13}. We reproduce this here in Table \ref{Tab2}. As an example, if we assume exponential fading with mean 1 (as is often the case), the expression for the service probability gives
  \begin{eqnarray}
  P_{srv,j}(b_j;z_j) = 1-\exp(-\pi \lambda_t b_j  \frac{P^{2/\alpha}\Gamma(\frac{2}{\alpha}+1)}{N^{2/\alpha}} I_T(z_j,\bar{\tau})).\nonumber
  \end{eqnarray}


\begin{table}[!t] 
 \centering
 \caption{Fading distributions, taken from \cite{BartekPIMRC13}.}
 \vspace{+0.5cm}
 \begin{tabular}{|c|c|c|} 
 \hline
 Distribution & Probability density of H & $\esperance[H^{2/ \alpha}]$\\ 
 \hline
 & $1/(h\sqrt{2\pi}\sigma)$& \\
 Log-normal & $\times e^{-(\ln{h}-\mu)^2/2\sigma^2}$ & $e^{2(\sigma^2+\mu \alpha)/\alpha^2}$\\
 \hline
 Exponential & $\lambda e^{-\lambda h}$ & $\lambda^{-2/\alpha} \Gamma(2/ \alpha+1)$\\ 
 \hline
  & $(k/\lambda)(h/\lambda)^{k-1}$&\\
  Weibull & $\times e^{(-h/\lambda)^k}$ & $\lambda^{2/\alpha} \Gamma(2/(\alpha k)+1)$  \\ 
 \hline
 & $2m^m/ \Gamma(m) \Omega^m h^{2m-1}$&\\
 Nakagami & $\times e^{-(m/\Omega)h^2}$ &  $\Gamma(1/{\alpha+m})\Gamma(m) (\Omega/m)^{1/\alpha}$ \\ 
 \hline
  & $(h/\sigma^2)I_{0}(h\nu/\sigma^2)$ &$(2\sigma^2)^{(1/\alpha)} \Gamma(1/\alpha+1)_\mathbf{1}$\\
  Rice & $e^{-(h^2+\nu^2)/(2\sigma^2)}$ & $ \times F_{1}(-1/\alpha,1;-\nu^2/(2\sigma^2)$  \\
 \hline
\end{tabular}
\label{Tab2}
\end{table}


\subsubsection{Lifespan distribution for $I_T$}

The distribution for node lifespan $T$ plays an important role for performance. To get first intuition we can use (i) a fixed period $\tau_i=\bar{\tau}$, $\forall x_i$. In this case, all nodes will be available for data transmission during the period $\left[0,\bar{\tau}\right]$ and after that, no service will be provided. Since lifespan does not vary, the probability of service mostly depends on the probability a transmitter with the desired object to be sufficiently close to the origin. This distribution simplifies the calculations, giving
\begin{eqnarray}
\label{ITfix}
I_T(z_j,\bar{\tau}) \overset{T=\bar{\tau}}{=} \left(2^{\frac{z_j}{W\bar{\tau}}}-1\right)^{-2/\alpha}.
\end{eqnarray}

A more realistic assumption is that of an exponential distribution. Each node has the ability to provide service for an exponential time while keeping its position, before moving away from the origin. In this case, where $T\sim Exponential(\bar{\tau}^{-1})$, the factor $I_T$ can be calculated by the integral 
 \begin{eqnarray}
 \label{ITexp}
 I_T(z_j,\bar{\tau}) \overset{Exp}{=} \int_{0}^{\infty} \frac{1}{\bar{\tau}} \exp(-\frac{s}{\bar{\tau}}) \left(2^{\frac{z_j}{Ws}}-1\right)^{-2/ \alpha} ds.
\end{eqnarray}

\subsection{Total and Expected Service Success Probability}
Using Prop. \ref{Prop1}, and applying this to the expression in (\ref{Psrv}) for the total success metric, we get the following result. 

\begin{Cor}
\label{Cor1}
The Total Service Success Probability for the $\mathrm{SNR}$ coverage model is equal to (we omit the dependence on $\left(\{b_j\};\{a_j\},\{z_j\}\right)$ due to space limitation),
\begin{eqnarray}
\label{TotCor1}
P_{srv} = 1-\sum_{j=1}^F a_j\exp\left(-\pi\lambda_t b_j \frac{I_H}{N^{2/\alpha}} I_T(z_j,\bar{\tau})\right).
\end{eqnarray}
\end{Cor}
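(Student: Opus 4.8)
The plan is to feed the per-object formula of Proposition~\ref{Prop1} into the definition~(\ref{Psrv}) of the Total Service Success Probability. Starting from~(\ref{Psrv}), namely $P_{srv}\left(\{b_j\};\{a_j\},\{z_j\}\right) = \sum_{j=1}^{F} a_j\, P_{srv,j}(b_j;z_j)$, the whole statement reduces to substituting the closed form~(\ref{Psrvj}) term by term.

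Inserting $P_{srv,j}(b_j;z_j) = 1 - \exp\left(-\pi \lambda_t b_j \frac{I_H}{N^{2/\alpha}} I_T(z_j,\bar{\tau})\right)$ for each $j$ splits the sum into $\sum_{j=1}^F a_j$ minus $\sum_{j=1}^F a_j \exp\left(-\pi \lambda_t b_j \frac{I_H}{N^{2/\alpha}} I_T(z_j,\bar{\tau})\right)$. The first sum equals $1$ by the normalisation of the popularity law $\{a_j\}$, which immediately yields the claimed expression~(\ref{TotCor1}).

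Honestly, there is no hard step here; the only point worth a remark is that Proposition~\ref{Prop1} is invoked independently for each content, with its own placement probability $b_j$ and its own size $z_j$. This is legitimate because $P_{srv,j}$ was derived conditionally on the requested file being $c_j$, so no coupling among the different contents, and no joint law of the sizes, enters at this stage. Taking the popularity-weighted average is then precisely what~(\ref{Psrv}) prescribes, and nothing further needs to be checked.
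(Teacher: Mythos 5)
Your proposal is correct and is exactly the argument the paper intends: substitute the closed form of Proposition~\ref{Prop1} into the popularity-weighted sum~(\ref{Psrv}) and use $\sum_{j=1}^F a_j = 1$ to pull the leading $1$ out. The paper states the corollary without writing these steps out, so your derivation simply makes the same one-line substitution explicit.
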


By further assuming independently sampled file-sizes from a distribution with c.d.f. $F_Z(z)$ and p.d.f. (for continuous functions) or p.m.f. (for discrete functions) $f_Z(z)$, the expected success metric  takes the following expression.

\begin{Cor}
\label{Cor2}
The Expected Service Success Probability for the $\mathrm{SNR}$ coverage model is equal to (we omit the dependence on $\left(\{b_j\};\{a_j\}\right)$ due to space limitation),
\begin{eqnarray}
\label{ETotCor1}
\tilde{P}_{srv} = 1-\sum_{j=1}^F a_j\mathbb{E}\left[\exp\left(-\pi\lambda_t b_j \frac{I_H}{N^{2/\alpha}} I_T(Z,\bar{\tau})\right)\right].
\end{eqnarray}
\end{Cor}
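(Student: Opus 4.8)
The plan is to obtain Corollary~\ref{Cor2} as an immediate consequence of Corollary~\ref{Cor1} together with the definition~(\ref{expPsrv}) of the Expected Service Success Probability. No new stochastic-geometry machinery is required: all the probabilistic work has already been carried out in Proposition~\ref{Prop1} and summed over the catalogue in Corollary~\ref{Cor1}; what is left is a single averaging step over the random file sizes $\{z_j\}$, plus a short check that this averaging is legitimate.

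Concretely, I would start from $\tilde{P}_{srv}=\mathbb{E}\left[P_{srv}(\{b_j\};\{a_j\},\{z_j\})\right]$, the expectation being over the i.i.d.\ file sizes $z_1,\dots,z_F$, each with c.d.f.\ $F_Z$ and generic representative $Z$. Substituting the closed form~(\ref{TotCor1}) and using linearity of expectation — legitimate because the sum over $j=1,\dots,F$ is finite and the constant $1$ together with the popularities $a_j$ are deterministic — gives
\begin{equation*}
\tilde{P}_{srv}=1-\sum_{j=1}^{F}a_j\,\mathbb{E}\left[\exp\left(-\pi\lambda_t b_j\frac{I_H}{N^{2/\alpha}}I_T(z_j,\bar{\tau})\right)\right].
\end{equation*}
In the $j$-th summand the factors $\lambda_t$, $b_j$, $I_H$ (which by~(\ref{Eh}) depends only on the fading law, not on the file sizes), $N$ and $\alpha$ are non-random, so the only randomness enters through $I_T(z_j,\bar{\tau})$, which by~(\ref{Et}) is a deterministic function of the single real argument $z_j$ — it is itself an expectation over the lifespan $T$, taken for a \emph{fixed} value of the file size. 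Since $z_1,\dots,z_F$ are identically distributed, each of these expectations equals the same functional of the generic size $Z$, and replacing $z_j$ by $Z$ yields exactly~(\ref{ETotCor1}).

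There is essentially no obstacle; the only two points worth a line of justification are the following. First, the expectation in~(\ref{ETotCor1}) is always well defined, with no moment assumption on $Z$: for every $z>0$ the quantity $I_T(z,\bar{\tau})$ is finite (the exponential tail of $T$ dominates the polynomial blow-up of $(2^{z/(WT)}-1)^{-2/\alpha}$ as $T\to\infty$), hence $\exp(-\pi\lambda_t b_j\frac{I_H}{N^{2/\alpha}}I_T(z,\bar{\tau}))\in(0,1]$, and a bounded measurable function is integrable against any probability law $F_Z$. This matters precisely because the paper allows heavy-tailed size distributions (Pareto, Weibull, Log-normal) for which $Z$ may have infinite moments, yet the metric remains perfectly well behaved. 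Second, one must keep the exponential \emph{outside} the inner expectation defining $I_T$: the right-hand side of~(\ref{ETotCor1}) is the double expectation $\mathbb{E}_Z\!\left[\exp(-c\,\mathbb{E}_T[\cdots])\right]$ and \emph{not} $\mathbb{E}_Z\mathbb{E}_T\!\left[\exp(-c\,\cdots)\right]$, the latter being a genuinely different (by Jensen's inequality, larger in the exponential factor) quantity that would model a different interaction between size and mobility.
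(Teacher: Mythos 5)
Your proof is correct and follows essentially the same route as the paper, which obtains Corollary~\ref{Cor2} simply by taking the expectation over the i.i.d.\ file sizes in the definition~(\ref{expPsrv}) and substituting the closed form from Proposition~\ref{Prop1}/Corollary~\ref{Cor1}, using linearity and $\sum_j a_j=1$. Your added remarks on the boundedness of the integrand (hence well-definedness for heavy-tailed $Z$) and on the order of the two expectations are sound and go slightly beyond what the paper spells out.
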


\section{Numerical Evaluation}
\label{secV}

In this section, the expressions derived for the performance of the D2D model under study are numerically evaluated. Additionally, we have run extended simulations of the system, to validate their correctness. 

Specifically for the simulation environment, we consider the following. \textit{Geometry Parameters:} The simulation is observed within a window of size $100\times 100$ [$km^2$], where transmitter devices are distributed as a Poisson point process (PPP) of density $\lambda_t = 2.5 \cdot 10^{-3}$ [$transmitters/m^2$]. With this transmitter density, the mean distance from $o$ to the closest transmitter is $r_{first} = (2\sqrt{\lambda_t})^{-1} = 10$ $[m]$. Each node has a lifespan that is exponentially distributed with mean value that varies for (a) Audio files $\bar{\tau} \in \left[0, 100\right]$ [$sec$], and (b) for Video files $\bar{\tau}\in\left[0, 1000\right]$ [$sec$]. The receivers form also a PPP but we consider just one receiver per realisation placed at the Cartesian origin $o$, since no resource sharing is assumed. The simulation results are averaged over $T_{sim} = 2000$ iterations (a larger number gives even better fit). \textit{Wireless Parameters:} The transmission is interference free. Each node operates on a bandwidth of $5$ [$MHz$] and emits with Power $P=0.5$ [$Watt/Hz$], whereas noise power is $N=10^{-11}$ [$Watt/Hz$]. The path-loss exponent is $\alpha=4$ and fading is Rayleigh (hence exponential distribution). \textit{Content Parameters:} We consider a catalogue of size $F=100$ objects, and a Zipf distribution for popularity, with exponent $\gamma=0.78$. The objects can be (a) Audio files, or (b) Video files. The cache size is $K=5$ [$objects$] and we do not consider the influence of file-sizes when filling in the node inventories.

\textit{Content Placement:} We use the probabilistic block placement policy (PBP) proposed in \cite{BlaGioICC15}, which for each node samples an independent vector of at most $K$ objects, and satisfies the placement probabilities $b_j\geq 0$ for every object $c_j$. The vector $\{b_j\}$ should be given as system input. The choice of entries is critical for the system performance itself and is a design parameter. In the current simulation we take $b_j >0$, for $1\leq j \leq 2K$ and $0$, otherwise. Then, for $1\leq j \leq 2K$, $b_j = a_j^*$, where $a_j^*$ is a sort of normalised popularity $a_j^* := \min\left\{\frac{Ka_j}{\sum_{k=1}^{2K} a_j}, 1\right\}$, so that $\sum_{j=1}^F b_j = \sum_{j=1}^{2K} b_j \leq K$.

We investigate different file-size distributions for the random variable $Z$. In all cases, the choice of file-size per object $c_j$ is i.i.d. and $z_j\sim Z$, $\forall c_j\in C$, so that the mean size for Audio is $\approx 10$ [$Mb$] and for Video $\approx 1$ [$Gb$]. In the case of a \textit{Uniform} distribution, the range of file-sizes for Audio and Video given in Table \ref{Tab1} satisfy the values of the expectation.

\subsection{Validation}

The correctness of the expression in (\ref{TotCor1}) is validated for the case of Audio and Video files separately. Both sets of file-sizes ($100$ in total in each set) are \textit{sampled} from an exponential distribution with mean values $10$ [$Mb$] (audio) and $1$ [$Gb$] (video) respectively. The expression for $I_T$ is given in (\ref{ITexp}) because lifespan distribution is exponential as well. The comparison between analysis and simulation is illustrated in Fig. \ref{fig:valid} for the Total Service Success Probability over a range of mean lifespan values, which is chosen differently for the two file categories. The plots show an excellent match between analysis and simulations. Interestingly, we observe that for $\bar{\tau}=100$ [$sec$], audio files have a success probability $P_{srv,Audio} \approx 0.37$, much higher than the success probability of videos for the same value of mean lifespan $P_{srv,Video} \approx 0.04$. This is reasonable due to the difference in mean file-size. Both sub-figures show a diminishing increase of $P_{srv}$. The probability should converge to some value less than one, because of the placement policy, which leaves $F-2K$ objects definitely uncached. Moving on the $x$-axis in both plots from left to right represents a change in the D2D behaviour, from higher to lower \textit{mobility}.
\begin{figure}[ht!]
\centering
\includegraphics[width=1.7in]{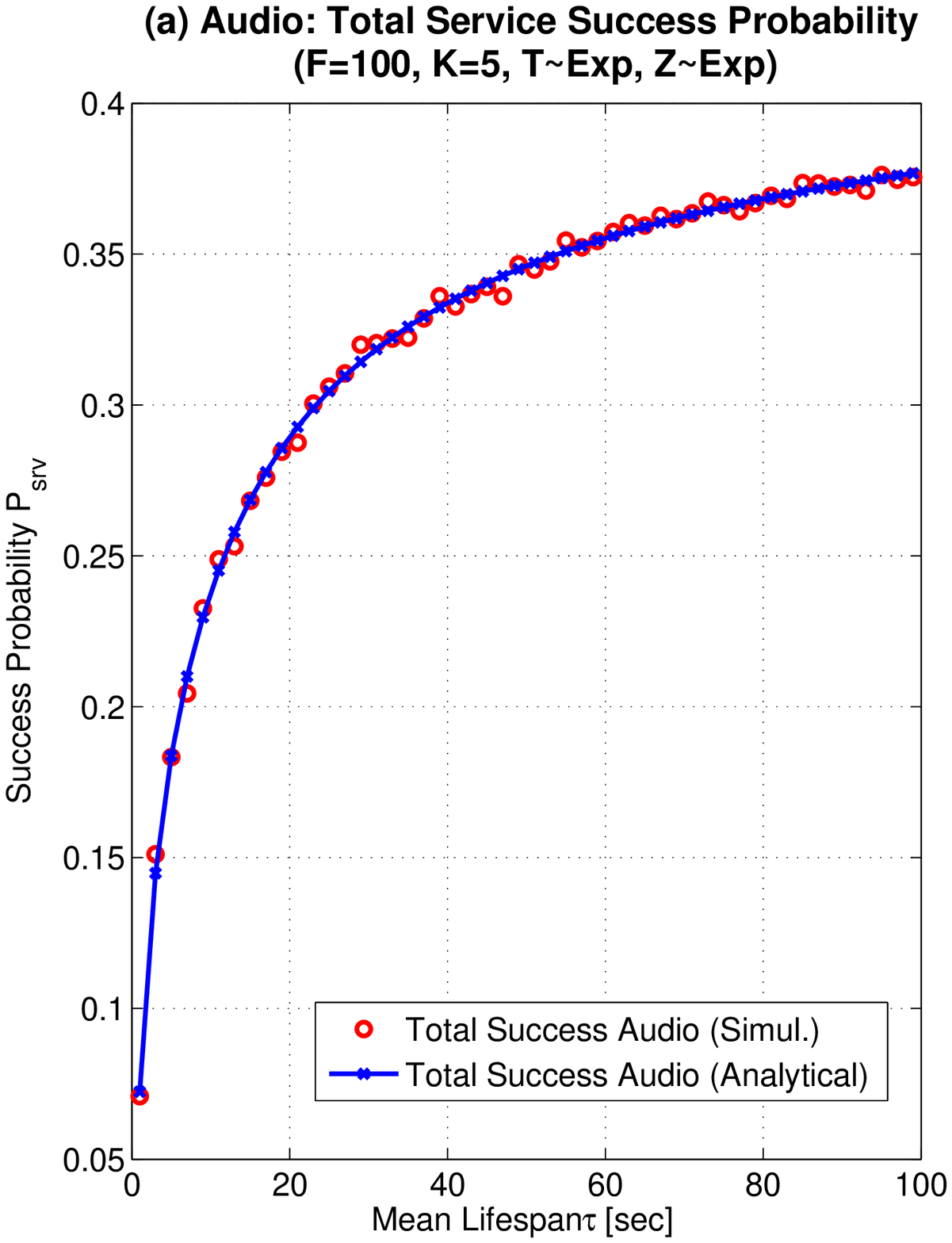}
\includegraphics[width=1.7in]{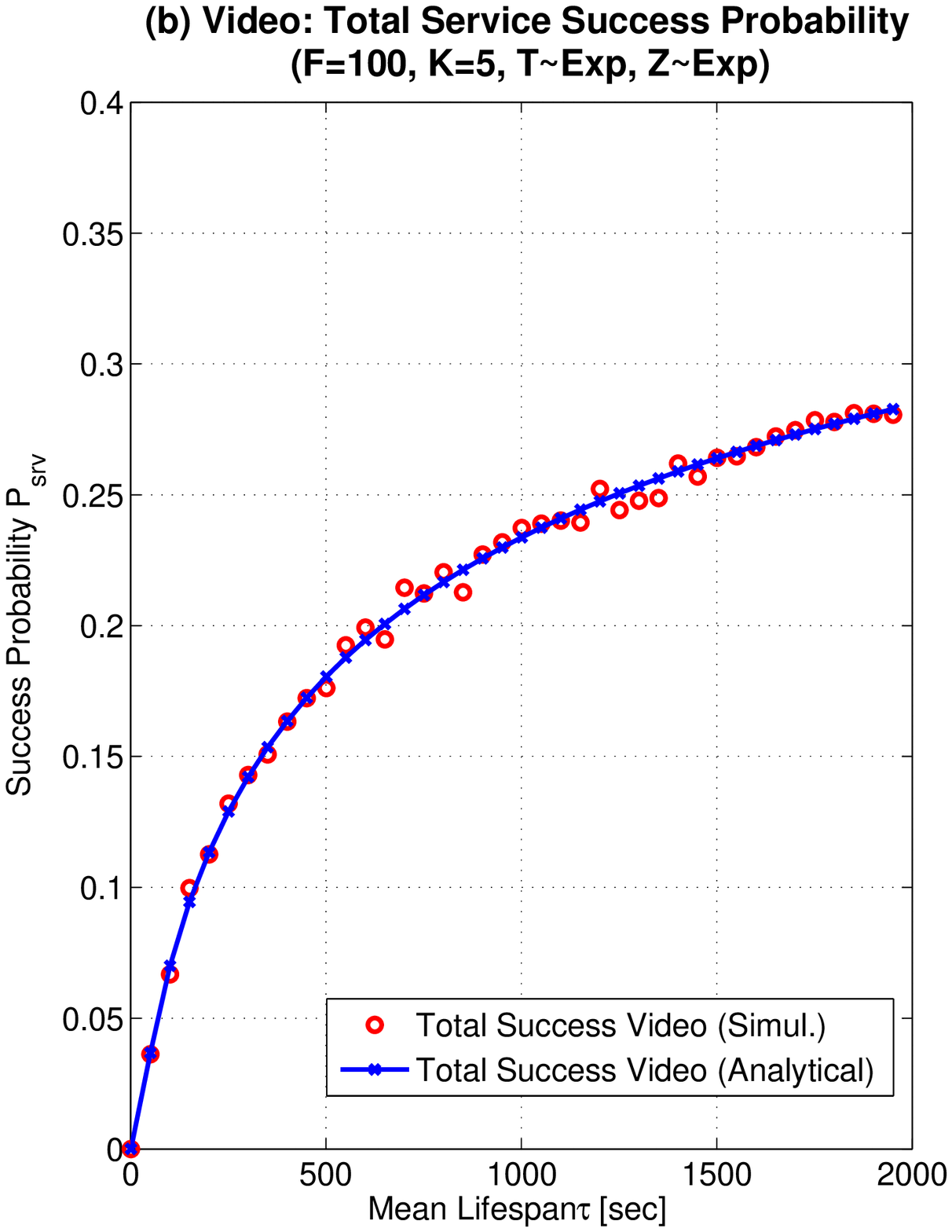}
\caption{Total service probability with respect to mean node lifespan, for cached (a) Audio files and (b) Video files.}
\label{fig:valid}
\end{figure}

\subsection{Evaluation}

\subsubsection{Influence of file-size order related to popularity}
The results in Fig. \ref{fig:valid} are obtained when the file-size of each content is independently sampled from an exponential distribution. There is, hence, no correlation between popularity and file-size, meaning that the most popular file may have any size. We investigate how the service success probability is influenced from a possible correlation between the two file characteristics. More specifically, we simulate (and analytically calculate) two scenarios, one when the file-sizes are in decreasing order in relation to the popularity index (the most popular file is the largest one from the sample set, the second most popular the second largest and so on), and another scenario when the file sizes are in increasing order (most popular file is the smallest one). These two curves over the mean lifespan are produced for Video in Fig. \ref{fig:Corr} (for Audio the behaviour is similar), and can be directly compared to those in Fig. \ref{fig:valid}(b). We observe that when the file-sizes are in increasing order, the $P_{srv}$ is higher than in the independent case, because more popular files (which are also cached due to the choice of the placement policy) are smaller and thus more likely to be fully transmitted within the lifespan. On the other hand, when the sizes are in decreasing order, the $P_{srv}$ is lower than in the independent case, for exactly the opposite reason.
\begin{figure}[ht!]
\centering
\includegraphics[width=3.4in]{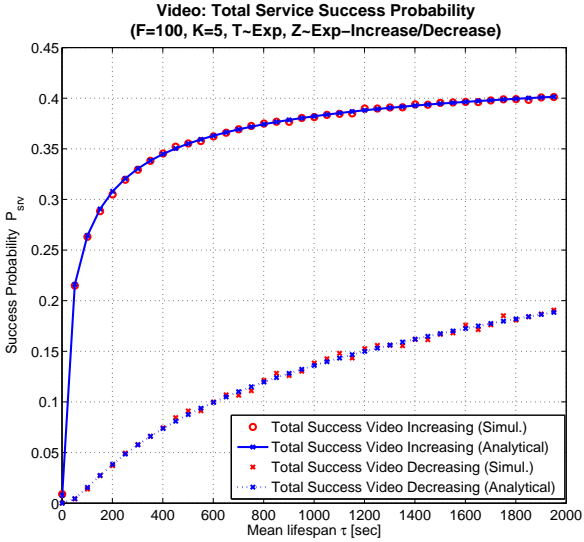}
\caption{Increasing/Decreasing file-size with popularity order and how it influences the total service probability, for cached Video files, over the mean node lifespan.}
\label{fig:Corr}
\end{figure}

\subsubsection{Influence of file-size distribution in the Expected/Total Service Success Probability}
We investigate this very important issue with the following methodology. We consider five possible distributions for the file-size of Videos: \textit{A}. \textit{Uniform} within $\left[z_{\min}, z_{\max}\right]$, \textit{B}. \textit{Exponential} with parameter $\lambda$, \textit{C}. \textit{Pareto} with parameters $(a,\beta)$, \textit{D}. \textit{Weibull} with parameters $(\mu,k)$, \textit{E}. \textit{Log-Normal} with parameters $(\mu,\sigma)$. Each one of these has very different characteristics and the system performance depends on the parameter values. 

To keep comparison fair, we choose the parameters so that $\mathbb{E}\left[Z\right] = 1$ [$Gb$], which is the mean value of the video file-size. Having this in mind, \textit{A}. the Uniform is in agreement with the expected value when using the upper and lower bounds in Table \ref{Tab1}. \textit{B}. For Exponential $\lambda = 10^{-9}$. \textit{C}. For Pareto it holds that $\mathbb{E}[Z] = \frac{\beta a}{a-1}$, and $a>1$ to have finite first moment. Then we choose $a=20/19$ and $\beta=0.05\cdot 10^{9}$ which not only gives the desired expected value but also guarantees the same lower bound $z\geq \beta=z_{\min}$  with the Uniform distribution. \textit{D}. For Weibull, $\mathbb{E}\left[Z\right]=\mu\Gamma(1+\frac{1}{k})$, so that the choice is $\mu=276$ and $k=0.1<1$. Finally, \textit{E}. for Log-normal, $\mathbb{E}\left[Z\right]=\exp(\mu+\frac{\sigma^2}{2})$ and we choose $\mu=5\ln(10)$, $\sigma=\sqrt{8\ln(10)}$.

To observe the influence of these file-size distributions on the probability of service, we first make use of the \textit{Expected Service Success Probability} metric, in (\ref{ETotCor1}). To simplify numerical evaluation, we use a fixed lifespan for all nodes $T=\bar{\tau}=1000$ $sec$, so that $I_T$ is given in (\ref{ITfix}). The results of the evaluations, with the parameters of each distribution chosen as above, are given in Fig. \ref{fig:Expdistvar}. From the plots, we observe that the distributions are ordered as: $A.\ Uni <B.\ Exp <C.\ Par <E.\ Log <D.\ Weib$. We note that the three heavy-tailed (or just heavier than the exponential) distributions Weibull, Pareto and Log-Normal, with the parameter set chosen, give the maximum service performance. The reason that the exponential performs poorly is that, although it does not have the tendency to generate very high values, it does however produce a sufficient number of samples large enough to keep the $P_{srv}$ low. Contrary to this, the heavy-tailed distributions may produce extremely large samples, however not a large number of them, so that small files tend to have smaller size than the ones from the exponential. Different heavy-tailed distributions can generate samples which deviate considerably from the mean towards higher values, but with even smaller low values, in order to keep the same $\mathbb{E}[Z] = 1$ [$Gb$]. The reason for the poor performance of the uniform is that, although bounded between $[z_{\min}, z_{\max}]=[0.05, 2]$ [$Gb$], a considerable amount of samples, due to uniform sampling, will be around the highest value, whereas no samples can be smaller than the lowest bound. The performance in the plots converges to an upper bound $\mathbb{E} [P_{srv} ]\approx 0.3433$, equal to the sum of popularities of the $2K$ cached most popular files $\sum_{j=1}^{2K}a_j$ for $\gamma=0.78$.

\begin{figure}[ht!]
\centering
\includegraphics[width=3.4in]{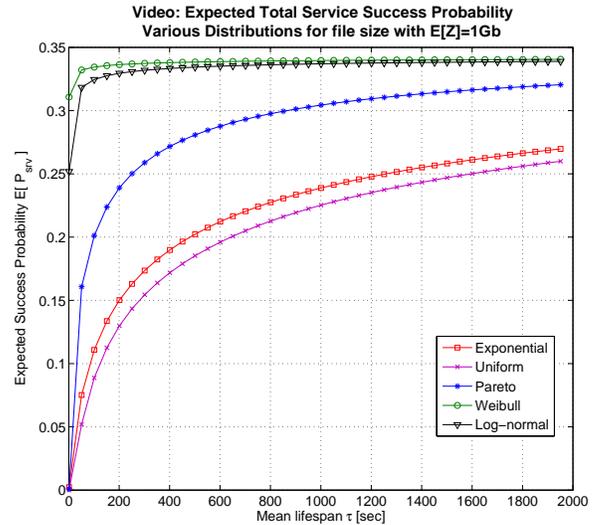}
\caption{Comparison of Expected Service Success Probability, with different file-size distributions 
for Video.}
\label{fig:Expdistvar}
\end{figure}

\vspace{+0.5cm}
To further give intuition on the effect of the tail distribution on the performance, we proceed in an alternative way: a larger sample set of size $F=200$ values is produced from each distribution. The samples of each set are then indexed in decreasing order. Table \ref{Tab3} shows the five highest values per set. Using these, the Total Service Success Probability per mean lifespan is plotted, which shows the same results as the expected case, with the difference in the performance of the Uniform distribution, which is higher due to the file-size upper bound of $2$ [$Gb$]. Both Fig. \ref{fig:Vardec} and Table \ref{Tab3} support the intuition and our explanations given above.

\begin{table}[ht!]
\normalsize
\caption{Samples of file-sizes [$Gb$] in descending order.}
\vspace{+0.5cm}
 \centering
\begin{tabular}{|l|c|c|c|c|c|}
 \hline
Distrib./Obj. & $c_1$ & $c_2$& $c_3$& $c_4$& $c_5$ \\ [1 ex]
 \hline
A. Uniform & 2.00 & 1.99 & 1.98 & 1.97 & 1.97 \\ [0.5 ex]
 \hline
B. Exponential & 6.21 & 5.64 & 5.33 & 4.22 & 4.01 \\ [0.5 ex]
\hline
C. Pareto & 8.93 & 6.37 & 5.70 & 4.92 & 1.47  \\ [0.5 ex]
\hline
D. Weibull & 24.00 & 18.07 & 1.45 & 0.07 & 0.04  \\ [0.5 ex]
\hline
E. Log-Normal & 7.28 & 7.05 & 2.49 & 1.13 & 1.05  \\ [0.5 ex]
\hline
\end{tabular}
\label{Tab3}
\end{table}

\begin{figure}[ht!]
\centering
\includegraphics[width=3.4in]{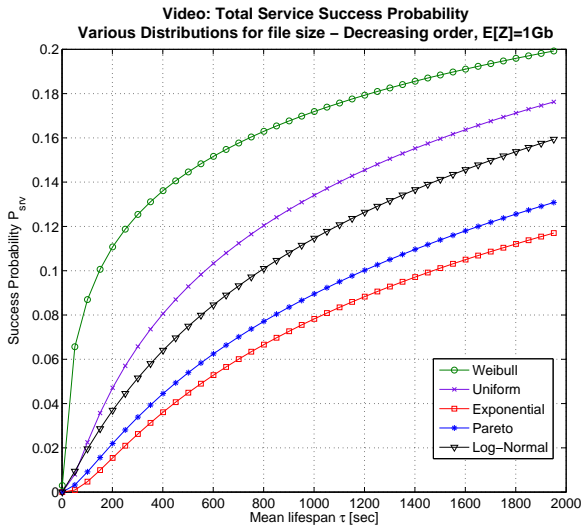}
\caption{Comparison of Total Service Success Probability, with different file-size distributions for a Video sample set of $F=200$ objects in decreasing order related to popularity.}
\label{fig:Vardec}
\end{figure}

\section{Conclusions}
\label{secVI}
The influence of mobility in the performance of cached D2D networks has been investigated. In the proposed model, a link is successful if the object request from the transmitter is cached at the memory of the receiver, and the link quality is sufficient for the entire object to be transferred before the transmitter leaves its place. The change in position is assumed sudden and the link is immediately dropped afterwards. This is a simplified approach to model mobility, and more realistic approaches can be considered in the future where the link quality could evolve more gradually. The work further investigates the influence of different file-size distributions on the network performance, and the analysis concludes that the exponential distribution may underestimate performance compared to heavy-tailed ones. Furthermore, it is shown that caching smaller files is in general beneficial, since these are more probable to be successfully transferred. Further extensions of the work may include the possibility of cooperative transmission \cite{QuekCoopCache16}, as well as an evaluation when interference influences the coverage probability. 

\section{Acknowledgement}
The authors would like to thank Bart{\l}omiej B{\l}aszczyszyn for the discussions and contributions throughout this research.

\bibliography{D2D}
\bibliographystyle{plain}

\end{document}